\newtheorem{theorem}{Theorem}
\newtheorem{observation}{Observation}
\newtheorem{lemma}{Lemma}
\theoremstyle{definition}
\newtheorem{definition}{Definition}
\newcommand{\sn}{\mathbf{s}^{(n)}}
\newcommand{\ham}{H_W(\mathbf{s}^{(n)} )}
\begin{document}
\title{Thermodynamics as Combinatorics: A Toy Theory}

\author{\IEEEauthorblockN{ \"Amin Baumeler }
	\IEEEauthorblockA{\textit{ IQOQI-Vienna } \\
		\textit{Austrian Academy of Sciences}\\
		Vienna, Austria \\
		 aemin.baumeler@oeaw.ac.at}
	\and
	\IEEEauthorblockN{ Carla Rieger}
	\IEEEauthorblockA{\textit{Department of Physics} \\
		\textit{Eidgen\"ossische Technische Hochschule}\\
		Z\"urich, Switzerland \\
		carieger@ethz.ch}
		\and
	\IEEEauthorblockN{ Stefan Wolf}
	\IEEEauthorblockA{\textit{Faculty of Informatics} \\
		\textit{Universit\`a della Svizzeria italiana}\\
		Lugano, Switzerland \\
		wolfs@usi.ch}

}

\maketitle

\begin{abstract}
We discuss a simple toy model which allows, in a~natural way, for deriving central facts from thermodynamics such as its fundamental laws, including Carnot's version of the second principle.
Our viewpoint represents thermodynamic systems as binary strings, and it links their temperature to their Hamming weight.
From this, we can reproduce the possibility of negative temperatures, the notion of equilibrium as the co\"{\i}ncidence of two notions of temperature~--- statistical versus structural~---, as well as the zeroth law of thermodynamics (transitivity of the thermal-equilibrium relation), which we find to be redundant, as other authors, yet at the same time not to be universally valid.
\end{abstract}

\begin{IEEEkeywords}
Thermodynamics, information theory, combinatorics, logic, complexity
\end{IEEEkeywords}
\section{Introduction}
In the present article, we attempt to understand and develop the links between
the well-established physical theory of {\em thermodynamics\/} on the one hand, 
 and {\em information (as well as algorithmic-complexity) theory\/} on the other.
Let us note first that, given the variety of already 
known connections~--- most obviously given through the notion of 
{\em entropy\/}~---, it is not surprising that close ties exist between 
these fields. 

In line of previous work described below, we propose a toy model of thermodynamics
featuring binary strings of infinite length (the ``thermodynamic limit'') as heat baths, 
where the strings' Hamming weight per length (``Hamming fraction'') is linked to their 
temperature. Our model is in the same spirit as a proposal by Spekkens~\cite{spekkens2007evidence},
presenting a toy model able to capture key properties of quantum physics that is based on a hidden-variable model. With this, he defends an epistemic view
of quantum states. Our combinatorial model of thermodynamics, in turn, can capture key features of the physical theory, such as its fundamental laws, suggesting 
that the essence of thermodynamics can be derived through combinatorics, i.e., counting arguments. 

In our model, we reproduce the notions of {\em heat reservoir, temperature, equilibrium,} and {\em entropy.} We
express in a natural and simple way the fundamental laws of thermodynamics 
from the zeroth to the second, including notably the first~(Carnot's) version of the 
second law~\cite{carnot1978reflexions}, bounding the efficiency of a circular process between
two heat baths as one minus the ratio between their absolute temperatures.

We also obtain in a straight-forward way the concept 
of negative temperature. The phenomenon arises for finite strings 
the Hamming fraction of which exceeds $1/2$~--- when the increase of energy ($1$s) 
effectively {\em de\/}creases the number of degrees of freedom~---, 
but also with respect to {\em structural temperature\/}, an alternative measure we define 
based on the notion of Kolmogorov complexity as entropy. 
\begin{figure}[!t]
    \centering
    \includegraphics[width = 0.35\textwidth]{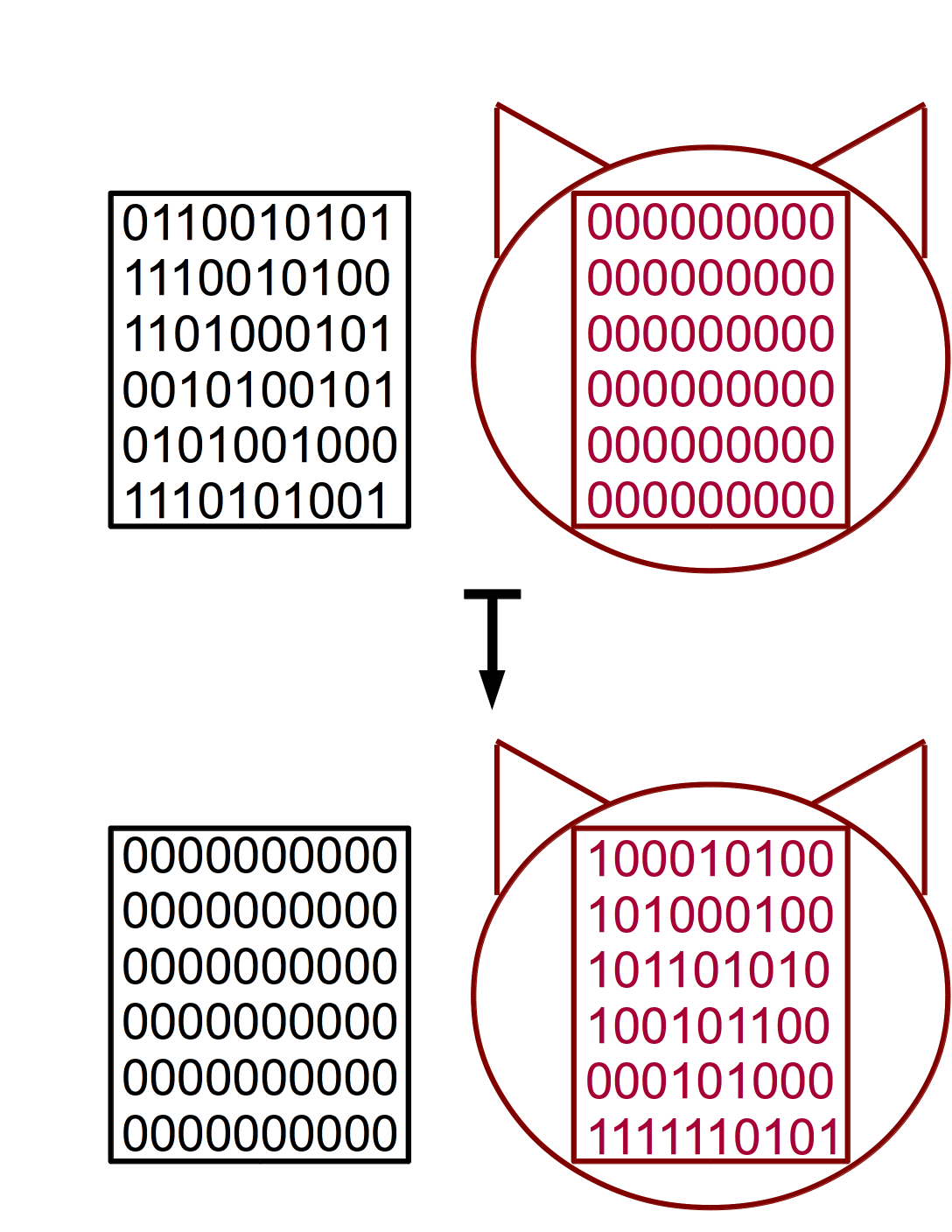}
    \caption{The resolution of the Maxwell-demon paradox as described by Bennett: The order that is created {\em outside\/} the demon is compensated by the disorder arising {\em inside\/} the demon~\cite{bennett1982thermodynamics}.}
    \label{fig:demon}
\end{figure}

\subsection{Previous results}
A close connection between the second law of thermodynamics and information was expressed
by Rolf Landauer~\cite{landauer1961irreversibility}, whose starting point was his famous slogan ``Information is Physical''~\cite{landauer1991}:
The erasure of $N$ bits of information costs at least\footnote{Here, $k_B$ stands for Boltzmann's constant.}~$k_B T\ln 2\cdot N$ of free 
energy, the dissipation of which as heat into the environment compensates for 
the entropy decrease of the erased memory. This principle was used by Bennett~\cite{bennett1982thermodynamics}
to resolve the paradox around ``Maxwell's demon'' (see Figure~\ref{fig:demon}):
The latter, in the course of her gas-sorting activity, accumulates information in her brain, and the corresponding 
growth of entropy compensates for the external entropy defect. The erasure of that 
information in the demon's internal state would in the end require exactly the 
same amount of free energy that is gained by the sorting~--- the paradox 
disappears.

Motivated by Landauer's principle, Bennett~\cite{bennett1982thermodynamics} as well as Fredkin and Toffoli~\cite{fredkin1982conservative} have 
developed a theory of {\em reversible computing\/}, i.e., a computing model which does not require the 
erasure of information. More specifically, Bennett has described a generic procedure to turn any computation
into a reversible one with essentially the same computational efficiency, whereas Fredkin and Toffoli
have presented a model of computation~--- the ``Ballistic computer,'' based on elastic collisions of balls 
on a billard table~--- that allows for carrying out any logically-reversible computation (no loss of information)
in a thermodynamically-reversible way (no heat dissipation into the environment). 

Based on this, and in generalization of Landauer's principle, the second law of thermodynamics has been speculated to take 
the form that {\em time evolutions are logically reversible\/}~\cite{baumeler2017causality}. In a~spirit related to the toy model 
considered in the present article, we have derived from this fact consequences of this second law 
resembling the formulations due to Clausius as well as Kelvin: ``Heat only flows from a hotter to a 
colder reservoir, not the other way around'' and ``No free energy from one heat bath alone'' (see~Figures~\ref{fig:clausius_second_law},~\ref{fig:kelvin_second_law}).

\begin{figure}[!h]
    \centering
    \includegraphics[width = 0.35\textwidth]{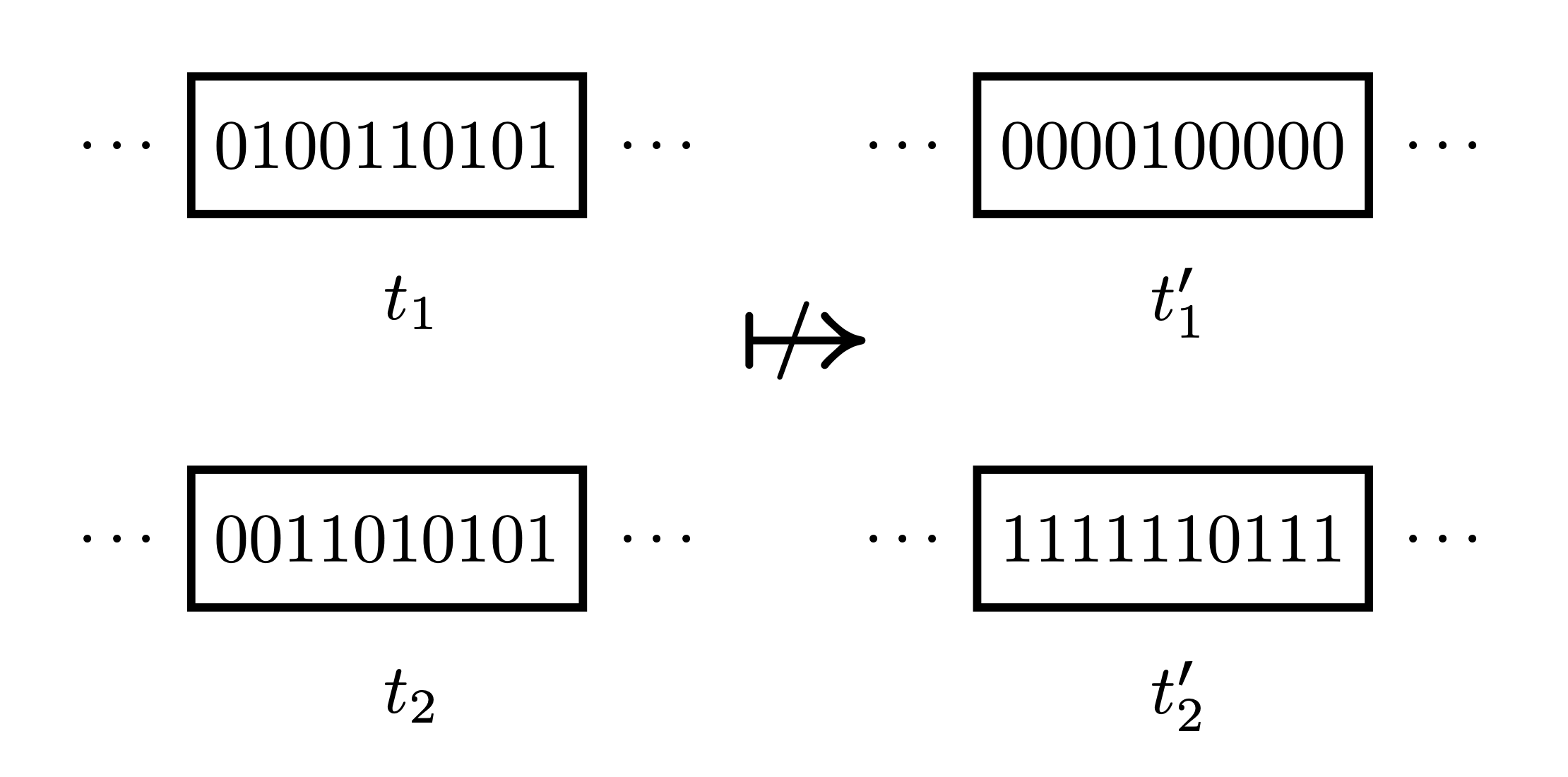}
    \caption{Clausius' second law: Logically reversible maps do not accentuate differences \cite{wolf2018second}, i.e., the depicted transformation is impossible for Hamming fractions $t_1 > t_1'$ and $t_2 < t_2'$ (the total number of $1$s is preserved; this is the first law).}
    \label{fig:clausius_second_law}
\end{figure}

\begin{figure}[!t]
    \centering
    \includegraphics[width = 0.45\textwidth]{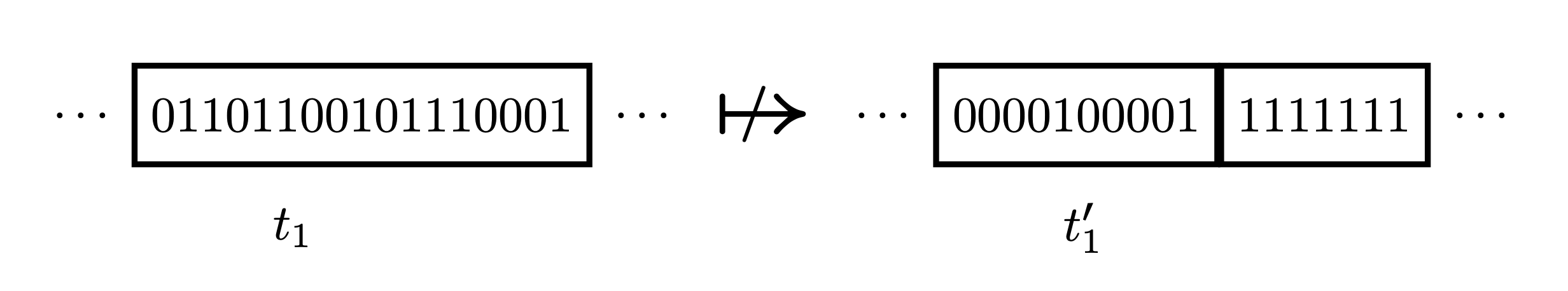}
    \caption{Kelvin's version of the second law: The depicted transformation is impossible for $t_1'<t_1$ (the total number of $1$s is preserved).}
    \label{fig:kelvin_second_law}
\end{figure}

It has been an open question how to reproduce {\em Carnot's\/} version in a similar way. We fill this gap in the present article by presenting a toy model powerful enough to reproduce additional notions such as temperature,
free energy, heat, as well as the efficiency of a~process connecting two heat baths.

The converse of Landauer's principle states that certain, namely {\em redundant\/}, pieces of information have a work value: They allow for transforming environmental heat into free energy. In~\cite{baumeler2017causality} and \cite{baumeler2019free}, general bounds are given, in terms of algorithmic complexity, on the work 
value of a general string as well as the thermodynamic cost of a computation. 

In a recently developed, process-oriented view of thermodynamics~\cite{kammerlander2018zeroth}, the conclusion 
has been drawn that the zeroth law be redundant. We reproduce this conclusion in our toy model presented
in this article: The law is a direct consequence of the notion of thermal equilibrium~--- but it has an error probability: It fails to hold with exponentially small probability in the size of the (random) heat baths. The reason, as we will see, is that the thermal-equilibrium relation fails to be reflexive: Two copies of the same heat bath contain redundancy, thus free energy.

\section{Preliminaries}
First, we introduce the relevant concepts for our toy theory described in Section~\ref{sec:toy_model}. 
The underlying set for our model consists of binary strings~$\mathbf s$ of unbounded length, 
\begin{equation}
   \mathbf{s} = s_1 s_2 s_3 \ldots \, ,
\end{equation} 
where $\sn = s_1 s_2 s_3 \ldots s_n $ is the finite string consisting of the first $n$ bits of $\mathbf{s}$.
\begin{definition}[Hamming fraction]
	The {\em Hamming fraction\/} $t(\mathbf s)$ of a string~$\mathbf s$ is defined in the asymptotic limit as
	\begin{equation}
		t (\mathbf{s}) := \lim_{n\to\infty} \frac{\ham}{n}
		\,,
	\end{equation}
	where~$\ham$ denotes the Hamming weight of string~$\sn$, and $n$ is the length of $\sn$.
\end{definition}
\noindent
We restrict our model to strings for which this limit exists.

\begin{definition}[Kolmogorov complexity]
	The {\em Kolmogorov complexity\/}~$K(\mathbf s)$~\cite{kolmogorov1968three} of a string~$\mathbf{s}$ is the length of the shortest program~$p$ for a universal\footnote{The Kolmogorov complexity $K_{\mathcal{U}}$ depends on the universal Turing machine~$\mathcal{U}$ only up to an additive constant. Since our considerations concern asymptotic limits, we ignore the Turing-machine dependence from now on.} Turing machine~$\mathcal{U}$ generating~$\mathbf{s}$:
	\begin{equation}
		K_{\mathcal{U}}(\sn) :=\min \{ |p| \ | \ \mathcal{U} (p) = \sn \}
		\,.
		\label{eqn:kolmogorov_compl}
	\end{equation}
\end{definition}

\begin{lemma}
	The Kolmogorov complexity $K(\mathbf s)$ is upper bounded by:
	\begin{equation}
	    \frac{K(\sn)}{n} \lesssim h(t)
	    \,,
	\end{equation}
	where $h$ is the binary entropy function.
\end{lemma}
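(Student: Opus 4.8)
The plan is to exhibit an explicit short program for $\sn$ whose length realizes the bound up to lower-order terms. Write $w := \ham$ for the Hamming weight of $\sn$. The key observation is that $\sn$ is completely determined by the triple $(n,w,j)$, where $j \in \{0, 1, \dots, \binom{n}{w}-1\}$ is the rank of $\sn$ in the lexicographic ordering of the $\binom{n}{w}$ length-$n$ strings of Hamming weight~$w$. A fixed Turing machine can, given a self-delimiting encoding of the pair $(n,w)$ followed by $j$ written in exactly $\lceil \log_2 \binom{n}{w} \rceil$ bits, unrank and output $\sn$. Hence
\begin{equation}
    K(\sn) \le \log_2 \binom{n}{w} + O(\log n)\,,
\end{equation}
where the $O(\log n)$ term absorbs the self-delimiting descriptions of $n$ and of $w \le n$, together with the constant-size decoder.

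Next I would invoke the standard entropy estimate for binomial coefficients, $\binom{n}{w} \le 2^{n h(w/n)}$, valid for all $0 \le w \le n$ under the convention $h(0)=h(1)=0$; for $w \le n/2$ this follows from $\sum_{i=0}^{w}\binom{n}{i} \le 2^{n h(w/n)}$ (a one-line Stirling computation), and the case $w > n/2$ reduces to it via $\binom{n}{w} = \binom{n}{n-w}$ and $h(w/n) = h(1-w/n)$. Dividing the previous display by $n$ and substituting gives
\begin{equation}
    \frac{K(\sn)}{n} \le h\!\left(\frac{w}{n}\right) + O\!\left(\frac{\log n}{n}\right)\,.
\end{equation}
Finally, since $w/n = \ham/n \to t$ by the definition of the Hamming fraction and $h$ is continuous on $[0,1]$, the right-hand side converges to $h(t)$ while the error term vanishes; taking $\limsup_{n\to\infty}$ then yields $K(\sn)/n \lesssim h(t)$, reading ``$\lesssim$'' as ``asymptotically at most.''

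I expect the content to be routine; the only points worth care are bookkeeping. First, the encoding of $n$ must be genuinely prefix-free (e.g.\ an Elias code), so that the decoder can tell where the description of $(n,w)$ ends and the $\lceil \log_2 \binom{n}{w} \rceil$-bit rank begins, yet this still costs only $O(\log n)$ bits. Second, one should make explicit that ``$\lesssim$'' is exactly what tolerates the additive $O(\log n)$ and the multiplicative $1+o(1)$ slack in the Stirling estimate, both annihilated upon division by $n$. Third, continuity of $h$ at the endpoints $0$ and $1$ is what permits passing $t$ through the limit when the bath is (almost) constant. None of these obstructs the argument; they merely need to be stated once.
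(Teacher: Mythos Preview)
Your argument is correct and is the standard route to this bound: encode $\sn$ by its length, Hamming weight, and lexicographic rank within its type class, then invoke $\binom{n}{w}\le 2^{nh(w/n)}$ and pass to the limit via continuity of $h$. The only quibble is cosmetic: the inequality $\sum_{i\le w}\binom{n}{i}\le 2^{nh(w/n)}$ is not really ``a one-line Stirling computation'' but rather the elementary binomial-probability trick $1=(p+(1-p))^n\ge\binom{n}{w}p^w(1-p)^{n-w}$ with $p=w/n$; either way the bound you need holds.

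As for comparison: the paper states the lemma without proof. The closest it comes is the estimate $\binom{n}{tn}\approx 2^{nh(t)}$ displayed at the start of Section~\ref{sec:toy_model}, which is precisely the combinatorial fact underlying your rank-encoding step. So your proposal is not merely consistent with the paper's approach but supplies the details the paper omits.
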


\section{The Toy Model: Strings and Their Temperature} \label{sec:toy_model}
The {\em macrostate\/} of a system~$\mathbf s$ is characterized by its length~$n$ and its Hamming fraction~$t$.
The number of possible strings~(i.e., {\em microstates\/} for given macroscopic observables) with fixed $n$ and $t$ is given by 
\begin{equation}
   \binom{n}{t \cdot n} \approx 2^{n\cdot h(t)} = 2^{n\cdot (-t\cdot \log_2(t) - (1-t) \cdot \log_2(1-t))}\,.
\end{equation}

We define below the (statistical) temperature of a string depending on its Hamming fraction. We first review an element of classical Boltzmann theory from statistical mechanics in order to motivate our definition, showing that the latter is in accordance with the traditional view. The probability to find a particle with energy $E_i$ in a system at temperature $T$ with total particle number $N = \sum_i N_i$ and total energy $E = \sum_i n_i \cdot E_i$, in the case that $N_i$ particles are occupying the energy state~$E_i$ for $i \in \{1,\ldots,J \}$, is given by the Boltzmann equilibrium distribution~\cite{hardy2014thermodynamics}:
\begin{equation}
    \frac{N_i}{N} = \frac{1}{Z} e^{-E_i/k_B T}\,.
    \label{thermal_state}
\end{equation}
This expression includes the classical partition function~\mbox{$Z = \sum_i e^{-E_i/k_B T}$}.
The formula~(\ref{thermal_state}) was derived by Boltzmann in a~{\em combinatorial\/} approach \cite{sep-statphys-Boltzmann}, maximizing the number of ways of arranging $N$ particles on energy states $E_i$ with occupation numbers $N_i$, while keeping $E$ and $N$ conserved.
For the binary string~\mbox{$\sn \in \{0,1 \}^n$}, each bit can be merely in two different states, i.e., the energy state $0$ or $\Delta E$, and based on (\ref{thermal_state}), one obtains the following probability for a bit $s_i$ being equal to $1$ at energy~$\Delta E$:
\begin{equation}
    P(s_i=1)\approx \frac{\ham}{n} = \frac{e^{- \Delta E/k_B T}}{1+e^{- \Delta E/k_B T}} \text{ for } i \in \{1, \ldots, n\}\,.
    \label{eqn:boltz}
\end{equation}
 Thereby, for each of the two-level systems, solely the energy difference $\Delta E$ of both states is relevant. If we consider~$\Delta E$ as a constant for a given string $\sn$, we may redefine the temperature $T' := k_B T/\Delta E$, and we set $P(s_i = 1)$ equal to the Hamming fraction $t$. If we solve (\ref{eqn:boltz}) for $T'$, then we obtain a term for defining the following temperature.

\begin{definition}[Statistical Temperature]
	We define the {\em (statistical) temperature\/} $T_{\text{stat}}$ of a string $\mathbf{s}$ as
	\begin{equation}
		\frac{1}{T_{\text{stat}}(\mathbf{s})} \coloneqq \log_2 \left( \frac{1-t(\mathbf{s})}{t(\mathbf{s})} \right)\,.
	    \label{def:stat_temp}
	\end{equation}
\end{definition}
\noindent
The dependence of the temperature $T_{\text{stat}}$ on $t$ is shown in Figure~\ref{fig:temperature}.

\section{Energy and the First Law}
The binary string represents an encoding of energy fractions for each degree of freedom of a macroscopic system, i.e., each element of the binary string $s_i$ represents a degree of freedom that either carries energy or not. Hence, we may think of the~$1$s as representations of units of energy: Each (binary) element is either in its ground state $(0)$ or {\em upper\/}
energy state $(1)$. Thus, the total number of $1$s, i.e., the total Hamming weight, encodes the total energy contained within the specific system.

\begin{figure}
    \centering
    \includegraphics[width = 0.4\textwidth]{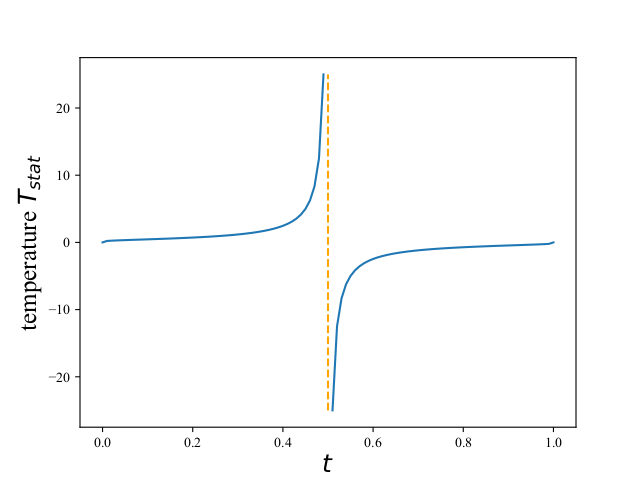}
    \caption{The temperature $T_{\text{stat}}$ in dependence on the Hamming fraction~$t$ of a~string~$\mathbf{s}$. }
    \label{fig:temperature}
\end{figure}

\begin{definition}[First Law]
    A transformation applied to a set of finite strings is said to respect the first law if it preserves the total Hamming weight of the strings.
\end{definition}

\section{Carnot's Version of the Second Law}
In accordance with previous related models, we consider the second law of thermodynamics as the logical reversibility of a transformation.
\begin{definition}[Second Law]
    A physical transformation acting on a set of finite strings is said to satisfy the second law of thermodynamics if it is logically reversible, i.e., no information is lost through the transformation.
\end{definition}
From this fact, it has already been observed that Clausius- and Kelvin-like versions of the second law follow. It was an open question whether the same holds for Carnot's theorem~--- which we show here. In order to achieve this, we consider physical transformation respecting {\em both\/} the first and the second law:
A physical transformation on a~finite string follows the first {\em and\/} second laws if it preserves the number of $1$s {\em and \/} is logically reversible. For such transformations operating on a~pair of strings, we now derive an upper bound on the efficiency of a process extracting free work.

Clausius' version forbids the flow of information from a~colder to a hotter string. We now consider the {\em inverse\/} scenario: If heat (a certain number of $1$s) flows from a hotter to a colder string, then not all the $1$s leaving the hotter string are required to be transferred to the colder one in order to enable the transformation to be logically reversible.~(A~logically reversible map cannot go from a larger to a smaller set.) The fraction of $1$s {\em not\/} required for this compensation can be extracted as free energy from the process. This fraction, when compared to the total number of $1$s leaving the hot string, is defined as the efficiency of the process.

\begin{theorem}[Carnot's Theorem]
	The efficiency~$\eta$ of the Carnot process is
	\begin{equation}
	    \eta = 1-\frac{T_{\text{\rm stat},2}}{T_{\text{\rm stat},1}} 
	\,,
	\end{equation}
	where~$T_{\text{\rm stat},1}$ is the temperature of the hot and~$T_{\text{\rm stat},2}$ of the cold string.
\end{theorem}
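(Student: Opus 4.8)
The plan is to run a microstate-counting argument on the two heat baths. Using the count $\binom{n}{t n}\approx 2^{n\,h(t)}$ recorded in Section~\ref{sec:toy_model}, ``logical reversibility'' of the transformation (injectivity) becomes the requirement that the total number of joint microstates not decrease. Let the hot string have length $n_1$ and Hamming fraction $t_1$ and the cold string length $n_2$ and fraction $t_2$, with $t_1>t_2$ (both below $1/2$, i.e., positive temperatures, so that $T_{\text{stat},1}>T_{\text{stat},2}>0$). A circular process removes $Q_1$ ones from the hot string, lowering its fraction to $t_1'=t_1-Q_1/n_1$; transfers $Q_2$ of them into the cold string, raising its fraction to $t_2'=t_2+Q_2/n_2$; and deposits the remaining $W=Q_1-Q_2$ ones into a register whose content is fixed and known. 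The first law holds by construction, $Q_1=Q_2+W$, and the quantity to be computed is the efficiency $\eta=W/Q_1=1-Q_2/Q_1$.

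Next I would impose the second law. Before the process the joint configuration ranges over $2^{\,n_1 h(t_1)+n_2 h(t_2)}$ possibilities (the work register contributing a single state) and afterwards over $2^{\,n_1 h(t_1')+n_2 h(t_2')}$; injectivity forces
\begin{equation}
   n_1 h(t_1)+n_2 h(t_2)\ \le\ n_1 h(t_1')+n_2 h(t_2')\,,
\end{equation}
that is, $\Delta S_2:=n_2\big(h(t_2')-h(t_2)\big)\ge \Delta S_1:=n_1\big(h(t_1)-h(t_1')\big)$: the entropy gained by the cold bath must compensate that lost by the hot one. Since $h$ is concave, $\Delta S_1\ge h'(t_1)\,Q_1$ and $\Delta S_2\le h'(t_2)\,Q_2$, whence $h'(t_1)\,Q_1\le h'(t_2)\,Q_2$, so $Q_2/Q_1\ge h'(t_1)/h'(t_2)$ and $\eta\le 1-h'(t_1)/h'(t_2)$. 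A one-line computation gives $h'(t)=\log_2\!\big((1-t)/t\big)$, which by~(\ref{def:stat_temp}) equals $1/T_{\text{stat}}$; hence $\eta\le 1-T_{\text{stat},2}/T_{\text{stat},1}$ for every transformation respecting both laws.

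It remains to show that the genuine Carnot (reversible, quasi-static) process attains this bound. Taking the exchanged heat negligible against the bath sizes --- $n_1,n_2\to\infty$ at fixed $Q_1,Q_2$, so that $t_1,t_2$ and hence the temperatures are unchanged to leading order --- the concavity estimates collapse to the first-order expansions $\Delta S_1\to h'(t_1)Q_1$ and $\Delta S_2\to h'(t_2)Q_2$, and an optimal logically reversible injection saturates $\Delta S_2=\Delta S_1$, giving $Q_2/Q_1\to T_{\text{stat},2}/T_{\text{stat},1}$ and $\eta\to 1-T_{\text{stat},2}/T_{\text{stat},1}$. I expect the main obstacle to be exactly this last point: making precise, within the combinatorial model, that a reversible map saturating the microstate inequality really exists, and controlling the error terms concealed in ``$\approx$'' carefully enough that they vanish in the thermodynamic limit. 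Identifying $h'(t)$ with $1/T_{\text{stat}}$, by contrast, is immediate, and it is precisely what makes the temperature definition~(\ref{def:stat_temp}) pull its weight here.
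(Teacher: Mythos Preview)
Your argument is correct and follows essentially the same route as the paper: count joint microstates via $2^{n h(t)}$, impose logical reversibility as non-shrinkage of that count, linearize in the transferred heat, and identify $h'(t)=\log_2\!\big((1-t)/t\big)$ with $1/T_{\text{stat}}$ to obtain $\eta=1-T_{\text{stat},2}/T_{\text{stat},1}$. The only difference is one of care: the paper works directly with the \emph{equality} $2^{n h(t_1)}2^{n h(t_2)}=2^{n h(t_1-\Delta_1)}2^{n h(t_2+\Delta_2)}$ and Taylor-expands, whereas you first derive the upper bound $\eta\le 1-T_{\text{stat},2}/T_{\text{stat},1}$ from the inequality via concavity of $h$ and then argue saturation in the quasi-static limit---a cleaner separation of the two halves of Carnot's statement, and one that makes explicit the step (existence of a bijection saturating the count) that the paper leaves implicit.
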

\begin{proof}
	Let us consider a string of Hamming fraction~$t_1$ and a colder one with $t_2$ ($<t_1$) being transformed by a heat flow to a lower fraction $ t_1' = t_1 - \Delta_1$ (where $\Delta_1$ is the fraction transferred away from the string), and for the second string a~higher one $t_2'= t_2 + \Delta_2$,~(see Figure~\ref{fig:carnot_second_law}).
	It will turn out that the number of~$1$s to be added to the colder string~$(\Delta_2 \cdot n)$ can be smaller than the number of~\mbox{$1$s~$(\Delta_1 \cdot n)$} taken out of the hotter, and the product of possible strings after the process is still equal to the one before.
	Then, the free energy that can be gained from the process is given by~$(\Delta_1 - \Delta_2) \cdot n$: This, again, is the number of $1$s {\em not\/} required to be transferred to the second string in order to allow for logical reversibility, i.e., guaranteeing that the number of pairs of strings before and after does not get smaller. Specifically,
\begin{equation}
    2^{n \cdot h(t_1)} \cdot 2^{n \cdot h(t_2)} =2^{n \cdot h(t_1 - \Delta_1)} \cdot 2^{n \cdot h(t_2 + \Delta_2)}
    \,.
    \label{eqn:log_revers}
\end{equation}
Since we assume the number of transferred $1$s to be much smaller than the total number of $1$s in the strings, the terms in~(\ref{eqn:log_revers}) can be approximated linearly in $\Delta_1, \Delta_2$, as follows
\begin{equation}
    2^{n \cdot h(t + \Delta)} = 2^{n \cdot h(t)} + (2^{n \cdot h(t)})' \cdot \Delta + \mathcal{O} (\Delta^2)\,,
\end{equation}
where $(\cdot)'$ represents the derivative for $t$. For~(\ref{eqn:log_revers}) to be fulfilled, we must have
\begin{equation}
     \Delta_2 \cdot (2^{n \cdot h(t_2)} ) ' \cdot 2^{n \cdot h(t_1)} = \Delta_1 \cdot (2^{n \cdot h(t_1)} ) ' \cdot 2^{n \cdot h(t_2)}\,.
    \label{eqn:step2}
\end{equation}
Hence, the ratio $\Delta_2/\Delta_1$ is given as
\begin{equation}
    \frac{\Delta_2}{\Delta_1} = \frac{(2^{n \cdot h(t_1)} ) '}{(2^{n \cdot h(t_2)} ) '} \cdot \frac{2^{n \cdot h(t_2)}}{2^{n \cdot h(t_1)}} = \frac{h'(t_1)}{h'(t_2)}\,.
    \label{eqn:ratio}
\end{equation}
Note that the derivative of the binary-entropy function is
\begin{align}
	h'(t) = \log_2\left(\frac{1-t}{t}\right)
\end{align}
and, therefore, through the definition of the statistical temperature, the ratio~$\Delta_2/\Delta_1$ is~$T_{\text{stat},2}/T_{\text{stat},1}$.
Thus, the efficiency $\eta$ of the process is
\begin{equation}
	\eta = \frac{\Delta_1 - \Delta_2}{\Delta_1} = 1- \frac{\Delta_2}{\Delta_1} = 1 - \frac{T_{\text{stat}, 2}}{T_{\text{stat}, 1}}\,.
\end{equation}
\end{proof}
\begin{figure}[!t]
    \centering
    \includegraphics[width = 0.5\textwidth]{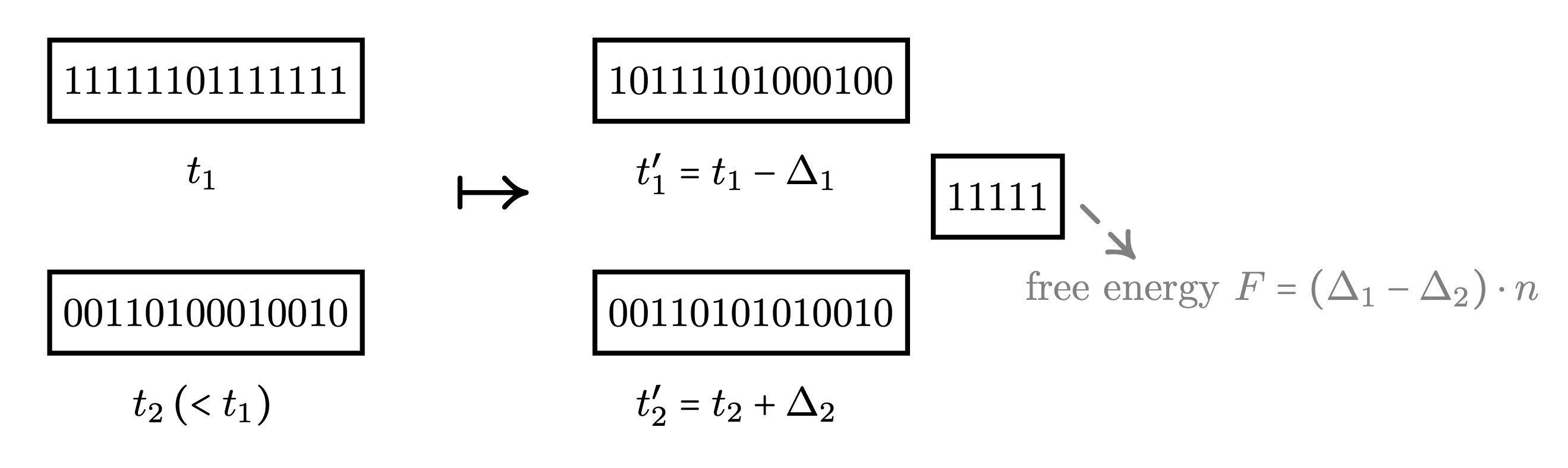}
    \caption{The scenario of Carnot's version of the second law. }
    \label{fig:carnot_second_law}
\end{figure}
The proven fact, which is formally equal to the Carnot efficiency can serve as a confirmation for both the temperature definition based on the Hamming fraction, as well as our reading of the second law as logical reversibility. In this sense, our toy model is consistent and reproduces traditional facts from thermodynamics. Encouraged by this observation, we extend the model to include the notion of entropy.

\section{Complexity as Entropy and Structural Temperature}
We defined the statistical temperature from the Hamming fraction through the derivative of the binary entropy function
\begin{equation}
	\frac{1}{T_{\text{stat}}} = \log_2 \left( \frac{1-t}{t} \right) = h'(t)\,.
\end{equation}
In consequence, the information-theoretic entropy co\"incides with the physical entropy.
\begin{observation}[Entropy]
	The change of the information-theoretic entropy~$\Delta h(t)\cdot n$ equals the change of the physical entropy~$\Delta S$.
\end{observation}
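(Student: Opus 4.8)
The plan is to derive the Observation directly from the Clausius definition of physical entropy, combined with the identification of energy with Hamming weight (Section~``Energy and the First Law'') and the definition of the statistical temperature. First I would fix the energy scale: a string $\sn$ of length $n$ and Hamming fraction $t$ carries energy $E = \Delta E \cdot \ham \approx \Delta E \cdot t \cdot n$, and after absorbing the rescaling $T' := k_B T/\Delta E$ used to arrive at the definition of $T_{\text{stat}}$, we may set the energy unit $\Delta E = 1$, so that $E = t\cdot n$ and hence $dE = n\,dt$ for an infinitesimal change of the Hamming fraction at fixed $n$.

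Next I would argue that for an isolated heat bath modeled by a single string there is no mechanical-work contribution: its energy changes only through an inflow or outflow of $1$s, i.e.\ through heat, so that the heat exchanged in such an infinitesimal process is $\delta Q = dE = n\,dt$. Invoking the Clausius relation $dS = \delta Q/T_{\text{stat}}$ and substituting the definition of the statistical temperature, $1/T_{\text{stat}} = \log_2\!\big((1-t)/t\big) = h'(t)$, then yields
\begin{equation}
    dS = \frac{\delta Q}{T_{\text{stat}}} = n\,dt\cdot h'(t) = n\,dh(t)\,.
\end{equation}
Integrating between the initial and final macrostates gives $\Delta S = n\,\Delta h(t)$, which is exactly the change $\Delta h(t)\cdot n$ of the information-theoretic entropy, i.e.\ the logarithm of the microstate count $\binom{n}{tn}\approx 2^{n\,h(t)}$. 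This proves the Observation.

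The main obstacle is conceptual rather than computational: one must justify that $\delta Q$ equals the full energy change $dE$, i.e.\ that redistributing $1$s within a single bath involves no work and no other energy bookkeeping. This is consistent with the reading of the second law used elsewhere in the paper, where extractable free work appears only when two baths are coupled (the Carnot process). A minor technical point to be careful about is that the Stirling-type approximation $\binom{n}{tn}\approx 2^{n\,h(t)}$ contributes only sub-leading corrections in $n$, so the equality holds to leading order, which is all that is claimed in the asymptotic (thermodynamic-limit) setting of the model.
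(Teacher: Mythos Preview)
Your proposal is correct and follows essentially the same route as the paper: both arguments identify $\Delta Q$ with $n\,\Delta t$, use $1/T_{\text{stat}} = h'(t)$, and apply the Clausius relation $\Delta S = \Delta Q/T$ to obtain $\Delta S = n\,\Delta h(t)$. Your version is somewhat more explicit about the energy scale, the ``no work'' justification for $\delta Q = dE$, and the sub-leading Stirling corrections, but these elaborations do not change the underlying argument.
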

The derivative~$h'(t)$ is the limit of the ratio of the change $\Delta h$ of $h$ given a change $\Delta t$ of $t$, and therefore, the inverse statistical temperature is
\begin{equation}
	\frac{1}{T_{\text{stat}}} \approx \frac{\Delta h(t)}{\Delta t}\,.
\end{equation}
This is consistent with Clausius' entropy definition 
\begin{equation}
    \Delta S = \frac{\Delta Q }{T}
\end{equation}
or, equivalently, 
\begin{equation}
    \frac{1}{T} = \frac{\Delta S}{\Delta Q}.
\end{equation}
Thus, if $\Delta Q = \Delta t \cdot n$ (which is in line with viewing the $1$s as the energy), then $\Delta h(t) \cdot n $ corresponds to the entropy change, making the information-theoretic entropy~$h(t) \cdot n $ of the string correspond
to the physical entropy in our toy model.

However, the actual information content of the string may be smaller than this if it has additional structure, namely,
if the Kolmogorov complexity of the string is smaller than the entropy as defined simply by the fractions of $0$s and $1$s of the string. This serves as a motivation to define a second temperature measure, namely the {\em structural temperature}:
\begin{definition}[Structural Temperature]
	We define the {\em structural temperature\/}~$T_{\text{struc}}$ through
	\begin{equation}
		\frac{1}{T_{\text{struc}}(\mathbf{s})} \coloneqq \max_{\text{position  choices}} \left( \frac{\Delta K (\mathbf{s}) }{\Delta t} \right) \,.
	\end{equation}
	\label{def:struc_temp}
\end{definition}
In contrast to the Hamming fraction, the Kolmogorov-complexity difference depends on the particular transformation, i.e., not merely on the {\em number \/} of $1$s added to the string, but the specific {\em positions\/}. We maximize over all possible choices of these positions. The defined notion of structural temperature allows for assigning a temperature to {\em any\/} string, without ignoring its structure beyond the Hamming fraction. An extremal example of a gap between the two temperatures is the alternating string 
\[ 010101010101010101 \ldots, \] which is statistically hot but structurally cold: Flipping a~bit at position~$k$ induces an arbitrarily large change~$\Delta K(\mathbf s)$ of~$\log_2(k)$.
Such a gap indicates that the string is far from thermal equilibrium. At the other end of the scale, the two temperature notions co\"{\i}ncide exactly for those strings which have no structure besides the one given by the Hamming fraction, i.e., for which the Kolmogorov complexity equals the information-theoretic entropy: 
\begin{equation}
    K(\sn) \approx n \cdot h(t)\,.
\end{equation}
In these latter strings, all the energy is contained in form of heat; and it is exactly those we understand as being in thermal equilibrium or, in other words, as heat baths. Whereas we have, generally 
\begin{align}
T_{\text{struc}} (\mathbf{s}) \leq T_{\text{stat}} (\mathbf{s}) \, ,
\end{align}
heat baths are defined by the equality of the two temperatures.
\begin{definition}[Thermal Equilibrium] \label{def:thermal_equilibirium}
	A string $\mathbf{s}$ is in {\em thermal equilibrium\/} if its statistical temperature equals its structural temperature, i.e.,
	\begin{equation}
		T_{\text{stat}} (\mathbf{s}) =T_{\text{struc}} (\mathbf{s})\,.
	\end{equation}
	Equivalently, we call such a string a {\em heat bath.}
\end{definition}
The notion of thermal equilibrium allows us to talk about the zeroth law of thermodynamics. 

\section{The Zeroth Law}

A string is in thermal equilibrium and, therefore, a heat bath, if its statistical and structural temperature co\"{i}ncide.
This definition extends naturally to {\em pairs\/} of heat baths: By~$\sn \equiv_{t.e.} \mathbf{s}'^{(n)}$, we mean:
\begin{equation}
    K(\sn || \mathbf{s}'^{(n)}) = h \left( \frac{t+ t'}{2} \right) \cdot 2n\,.
    \label{eqn:concat_te}
\end{equation}
\begin{definition}[Zeroth Law]
	The relation $\equiv_{t.e.}$ of thermal equilibrium between pairs of heat baths is {\em transitive.} 
\end{definition}
Clearly, heat baths of different temperatures are {\em not\/}
in thermal equilibrium. On the other hand, and this is the more surprising part, heat baths of the {\em same\/} temperature can fail to be in thermal equilibrium if 
\begin{equation}
    K(\sn || \mathbf{s}'^{(n)}) < K(\sn) + K( \mathbf{s}'^{(n)})\,.
\end{equation}
A consequence thereof is, first, that a heat bath is {\em not\/}
in thermal equilibrium with an identical copy of itself (the relation is irreflexive) since:
\begin{equation}
    K(\sn || \sn) \approx K(\sn).
\end{equation}
Furthermore, this irreflexivity of the relation implies that the zeroth law (transitivity) can also be violated:
Assume that $\mathbf{s}$ and $\mathbf{s}'$ are heat baths that are in thermal equilibrium, then,
\begin{align}
    \mathbf{s}\equiv_{t.e.}\mathbf{s'}\mbox{\ and\ }\mathbf{s'}\equiv_{t.e.}\mathbf{s}\, ,\  \text{but}\ \  \mathbf{s} \not \equiv_{t.e.}\mathbf{s}\ :
\end{align}
The zeroth law, which would imply that $\mathbf{s}$ is also in equilibrium with an identical copy of itself, is violated.
For {\em random\/} heat baths, however, the law holds except with probability exponentially small in their size; in particular, the failure probability vanishes for infinitely large baths. 
In summary, the zeroth law suffers in our model from  exactly the same deficit as the second law when stated as ``entropy always increases:'' It fails to hold with exponentially small probability.

\section{Negative Temperature}
Several authors have already proposed the concept of negative absolute temperature (see, e.g.,~\cite{ramsey1956thermodynamics}). This is consistent with our model. The dependence we propose of the temperature of a string and its Hamming fraction suggests that negative absolute temperatures occur for a string with Hamming fraction $> 1/2$ (see Figure~\ref{fig:temperature}, note that in some sense, strings with negative temperatures are {\em hotter\/}
than those with positive ones).
Intuitively, this is due to the fact that increasing the energy \textit{reduces} the degrees of freedom in terms of the number of possible strings. In our model, also negative {\em structural\/}
temperatures can occur if the string's structure is such that adding energy~--- in the form of $1$s~--- unavoidably makes the string more structured, i.e., its Kolmogorov complexity smaller.

\section{Conclusion}

We have proposed a toy model, based on binary strings, their Hamming weight linked
to the string's ``temperature,'' and have managed to reproduce basic facts 
of thermodynamics, such as its fundamental laws. With this, we
suggest that thermodynamic concepts such as free energy, heat, and thermal 
equilibrium are of combinatorial and computational nature. 
In our model, the notion of {\em negative temperature\/} emerges naturally. First, with respect to statistical temperature: When the Hamming fraction of a string exceeds~$1/2$, then additional energy ($1$s) mean {\em reduction\/} of degrees of freedom. Secondly, also the structural temperature, based on algorithmic complexity, can be negative (if additional energy, i.e., $1$s, reduces complexity). Note that equality of the two temperature measures co\"{\i}ncides with our notion of thermal equilibrium, and is related to the zeroth law, which appears in our model as redundant, underlining earlier observations by other authors. It is, however, valid only except with an error probability exponentially small in the size of the heat baths. This observation, which relates the zeroth law to the second, allowing for similar exceptions, is new.

In contrast to the zeroth and the second law, the situation around the third law
is less clear. We propose as an open question to put its different formulations 
in context with our model in order to get more insight on their relation and 
correctness.

\section*{Acknowledgment}
We thank Xavier Coiteux-Roy and Charles Bédard for interesting discussion on the topic of this article.
This work was supported by the Swiss National Science Foundation (SNF).
\"AB~acknowledges support from the Austrian Science Fund (FWF) through ZK3 (Zukunftskolleg) and F7103 (BeyondC).

\bibliographystyle{IEEEtran}
\bibliography{IEEEabrv,mybibfile}

\begin{thebibliography}{10}
\providecommand{\url}[1]{#1}
\csname url@samestyle\endcsname
\providecommand{\newblock}{\relax}
\providecommand{\bibinfo}[2]{#2}
\providecommand{\BIBentrySTDinterwordspacing}{\spaceskip=0pt\relax}
\providecommand{\BIBentryALTinterwordstretchfactor}{4}
\providecommand{\BIBentryALTinterwordspacing}{\spaceskip=\fontdimen2\font plus
\BIBentryALTinterwordstretchfactor\fontdimen3\font minus
  \fontdimen4\font\relax}
\providecommand{\BIBforeignlanguage}[2]{{%
\expandafter\ifx\csname l@#1\endcsname\relax
\typeout{** WARNING: IEEEtran.bst: No hyphenation pattern has been}%
\typeout{** loaded for the language `#1'. Using the pattern for}%
\typeout{** the default language instead.}%
\else
\language=\csname l@#1\endcsname
\fi
#2}}
\providecommand{\BIBdecl}{\relax}
\BIBdecl

\bibitem{spekkens2007evidence}
R.~W. Spekkens, ``Evidence for the epistemic view of quantum states: A toy
  theory,'' \emph{Physical Review A}, vol.~75, no.~3, p. 032110, 2007.

\bibitem{carnot1978reflexions}
S.~Carnot, ``R{\'e}flexions sur la puissance motrice du feu et sur les machines
  propres {\`a} d{\'e}velopper cette puissance,'' in \emph{Annales
  scientifiques de l'{\'E}cole Normale Sup{\'e}rieure}, vol.~1, 1872, pp.
  393--457.

\bibitem{bennett1982thermodynamics}
C.~H. Bennett, ``The thermodynamics of computation—a review,''
  \emph{International Journal of Theoretical Physics}, vol.~21, no.~12, pp.
  905--940, 1982.

\bibitem{landauer1961irreversibility}
R.~Landauer, ``Irreversibility and heat generation in the computing process,''
  \emph{IBM journal of research and development}, vol.~5, no.~3, pp. 183--191,
  1961.

\bibitem{landauer1991}
------, ``{Information is Physical},'' \emph{Physics Today}, vol.~44, no.~5,
  pp. 23--29, May 1991.

\bibitem{fredkin1982conservative}
E.~Fredkin and T.~Toffoli, ``Conservative logic,'' \emph{International Journal
  of theoretical physics}, vol.~21, no.~3, pp. 219--253, 1982.

\bibitem{baumeler2017causality}
{\"A}.~Baumeler and S.~Wolf, ``Causality--complexity--consistency: Can
  space-time be based on logic and computation?'' in \emph{Time in
  Physics}.\hskip 1em plus 0.5em minus 0.4em\relax Springer, 2017, pp. 69--101.

\bibitem{wolf2018second}
S.~Wolf, ``Second thoughts on the second law,'' in \emph{Adventures Between
  Lower Bounds and Higher Altitudes}.\hskip 1em plus 0.5em minus 0.4em\relax
  Springer, 2018, pp. 463--476.

\bibitem{baumeler2019free}
{\"A}.~Baumeler and S.~Wolf, ``Free energy of a general computation,''
  \emph{Physical Review E}, vol. 100, no.~5, p. 052115, 2019.

\bibitem{kammerlander2018zeroth}
P.~Kammerlander and R.~Renner, ``The zeroth law of thermodynamics is
  redundant,'' \emph{arXiv preprint arXiv:1804.09726}, 2018.

\bibitem{kolmogorov1968three}
A.~N. Kolmogorov, ``Three approaches to the quantitative definition of
  information,'' \emph{International journal of computer mathematics}, vol.~2,
  no. 1-4, pp. 157--168, 1968.

\bibitem{hardy2014thermodynamics}
R.~J. Hardy and C.~Binek, \emph{Thermodynamics and Statistical Mechanics: An
  Integrated Approach}.\hskip 1em plus 0.5em minus 0.4em\relax John Wiley \&
  Sons, 2014.

\bibitem{sep-statphys-Boltzmann}
J.~Uffink, ``{Boltzmann’s Work in Statistical Physics},'' in \emph{The
  {Stanford} Encyclopedia of Philosophy}, {Summer} 2022~ed., E.~N. Zalta,
  Ed.\hskip 1em plus 0.5em minus 0.4em\relax Metaphysics Research Lab, Stanford
  University, 2022.

\bibitem{ramsey1956thermodynamics}
N.~F. Ramsey, ``Thermodynamics and statistical mechanics at negative absolute
  temperatures,'' \emph{Physical Review}, vol. 103, no.~1, p.~20, 1956.

\end{thebibliography}
\end{document}